\newcommand{\e}{\varepsilon}
 \newtheorem{lem}{Lemma}
\newtheorem{definition}{Definition} \newtheorem{cor}{Corollary}
\begin{document}

\title{Bounds for nonlocality distillation protocols}

\author{Manuel Forster} \affiliation{Computer Science Department, ETH
  Z\"urich, CH-8092 Z\"urich, Switzerland}

\date{\today}

\begin{abstract}
  Nonlocality can be quantified by the violation of a Bell
  inequality. Since this violation may be amplified by local operations
  an alternative measure has been proposed -- distillable
  nonlocality. The alternative measure is difficult to calculate exactly
  due to the double exponential growth of the parameter space. In this
  article we give a way to bound the distillable nonlocality of a
  resource by the solutions to a related optimization problem. Our upper
  bounds are exponentially easier to compute than the exact value and
  are shown to be meaningful in general and tight in some cases.
\end{abstract}

\maketitle

\section{Introduction}

When two separated parts of a quantum state are measured in different
bases, then the outcomes can be correlated in a way that cannot be
explained by information shared before the
separation~\cite{Bell-1964}. This property has been termed {\em quantum
  nonlocality}. Nonlocal correlations have since proved to be a useful
resource leading to new applications such as, for example, quantum key
distribution~\cite{Hardy05,PhysRevLett.98.230501,PhysRevLett.102.140501,
  hanggi-2010-2010}. Stronger correlations that are still in accordance
with the nonsignaling postulate of relativity~\cite{PR-1994} can be
defined and are formalized and studied in so-called generalized
nonsignaling theories~\cite{barrett05,barret-2005} in which quantum
correlations are a special case. Assuming nonlocality that is
super-quantum to some extent has interesting consequences for nonlocal
computation~\cite{Linden-2006} and for communication
complexity~\cite{brassard-2006,brunner2009}. {\em Maximal\/} nonlocality
allows to compute every distributed Boolean function with just one
communicated bit~\cite{vandam-2005}.

Local correlations obey certain linear constraints, so-called Bell
inequalities~\cite{Bell-1964}. The extent by which a Bell inequality is
violated by a correlation can be taken as a measure for nonlocality
(see~\cite{PhysRevLett.101.050403,Elitzur199225,fitzi2010} for other
nonlocality measures and related results). The question arose whether
this measure is also meaningful for quantifying the nonlocality of a
resource consisting of several copies of a correlation. In a context
where nonlocality is more useful the stronger it is this measure is
problematic if stronger nonlocality can be obtained from a number of
weakly nonlocal correlations. In other words, we want to know if two
parties having access to correlations violating a Bell inequality by
some small extent can execute local operations to obtain a higher
violation. The two parties can carry out arbitrary local operations, but
{\em cannot\/} communicate. This process is called {\em nonlocality
  distillation}.

It has recently been shown that the Clauser-Horner-Shimony-Holt
inequality~\cite{CHSH-1969} (CHSH) as a measure for nonlocality in
minimal dimensions is indeed problematic because it is distillable in
general~\cite{fww-2009}. A protocol by Brunner and
Skrzypczyk~\cite{brunner2009} manages to distill an arbitrarily weak
nonlocal correlation, that is still super-quantum, to the extent where
communication complexity collapses. As proposed in \cite{fww-2009} a
more meaningful measure, which is by definition undistillable, is the
maximal CHSH violation achievable from many realizations of a given
nonlocal resource by any distillation protocol. Brunner et
al.~\cite{PhysRevLett.106.020402} recently compared this alternative
measure, termed {\em distillable nonlocality}, to the
Elitzur-Popescu-Rohrlich (EPR2) decomposition
approach~\cite{Elitzur199225,fitzi2010}. The authors discovered examples
of bound nonlocality and activation -- the box-world analogues to bound
entanglement and its activation.

H{\o}yer and Rashid~\cite{Hoyer2010} thoroughly analyzed different
distillation protocols, proved optimality in restricted classes and
discovered a strong dependence between the optimal protocol and the
parameters of the resource that it distills. Today our knowledge about
general distillation protocols is still very limited. Although tools and
techniques to analyze fixed protocols exist (for example by discrete
maps~\cite{brunner2009,PhysRevA.80.062107}), difficulties arise and the
lack of a neat framework becomes obvious when attempting to calculate
the distillable nonlocality of a fixed resource. Besides what results
from some trivial symmetries not much is known to simplify the task. It
usually requires an exhaustive search over all possible distillation
protocols. The fact that the search space grows doubly exponential in
the number of involved correlation copies represents a serious problem
for the usefulness and understanding of this alternative nonlocality
measure. Therefore, as recently suggested in
\cite{PhysRevLett.106.020402}, meaningful bounds for distillable
nonlocality that can be derived more efficiently are a reasonable
compromise.

In this article we propose such bounds and analyze them. They rely on a
recursive optimization problem that has a strong connection to the
process of optimizing distillation protocols for isotropic
nonlocality. Solved with a dynamic-programming approach our bounds yield
an exponential gain in the run-time compared to a brute-force
search. The efficiency advantage enables one to analyze instances with
the number of available correlations limited by 9. To demonstrate this
we calculated the distillable nonlocality bounds for a set of fixed
isotropic systems and found that, in this context, our bounds rule out
distillation and are therefore tight. Furthermore, we present a general
idea how to extend isotropic bounds to the non-isotropic case. Here, our
calculated bounds show that the distillable nonlocality approaches the
CHSH nonlocality of a single copy when the resource is chosen closer to
an isotropic line.

\section{Preliminaries}

Following a general approach to nonlocality~\cite{barret-2005}, we
consider correlations in the joint behavior of the two ends of a
bipartite input-output {\em system}, characterized by joint probability
distributions $P^{xy}_{AB}$ on random variables $A,B$ for each
$x,y$. Let $x$ and $a$ be the input and output on the left-hand side of
the system, and $y$ and $b$ the corresponding values on the right-hand
side. On inputs $x$ and $y$ the system returns outputs $a$ and $b$ with
probability $P^{xy}_{AB}(a,b)$.

\begin{figure}[h]
  \includegraphics{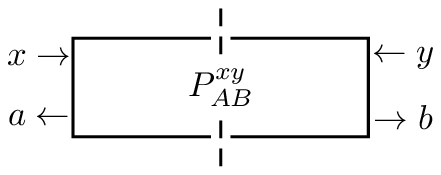}
  \label{fig1}
\end{figure}

In a distillation setting two parties, Alice and Bob, sharing a number
of systems can independently choose inputs -- possibly derived from
local outputs of other systems -- and collect outputs on their ends of
the systems. In other words, they can apply any classical circuitry to
their local parts of the systems. Such a local input-output strategy is
called a {\em wiring}~ \cite{barret-2005,wiring2006}. A party receives
its output from a system immediately after giving its input,
independently of whether the other has given its input already. This
prevents the parties from signaling by delaying their inputs. On the
other hand it allows for wirings that have a different temporal order on
the two sides. For example: While Alice's input into system 2 depends on
the output of system 1, Bob's input into system 1 depends on the output
of system 2. If the temporal order on both sides is equal we call the
wiring {\em ordered} and {\em disordered} otherwise. In this paper we
consider both kinds, summarized under general wirings as defined
in~\cite{pironio-2005}. Since we are only interested in optimality
ignoring non-deterministic strategies is sufficient due to
convexity. Thus, we are allowed to ignore randomness shared between the
two players to simplify the proofs.

Here, we assume nonlocality in its simplest form, namely, in the {\em
  binary} setting where each input and each output has two possible
values, i.e., $a,b\in\{0,1\}$ and $x,y\in\{0,1\}$. We refer to
\cite{barret-2005} for a detailed description of the convex polytope of
binary nonsignaling systems ($\mathcal{NS}$). In the case where both
inputs and both outputs are binary, the only Bell inequality (up to
symmetries) is the CHSH inequality. Furthermore, the set of eight CHSH
inequalities is complete for binary systems in the sense that if none of
them is violated, then the system is local. A system is called {\em
  isotropic} if it is a probabilistic mixture of opposite nonlocal
vertices of $\mathcal{NS}$, which are known as
PR-boxes~\cite{PR-1994}. See \cite{dw-2008} for a formal definition of
isotropic systems.

Let $P$ abbreviate any binary system. Suppose that Alice and Bob share
$n$ copies of $P$, which we index as $P_1,...,P_n$. The outputs of the
$i$-th system $P_i$ are denoted $(a_i,b_i)\in\{0,1\}^2$. So, the binary
strings $a=(a_1,a_2,...,a_n)\in\{0,1\}^n$ and
$b=(b_1,b_2,...,b_n)\in\{0,1\}^n$ are the outputs of the $n$ systems to
Alice and to Bob. Also, from now on, let $A=(A_1,...,A_n)$ and
$B=(B_1,...,B_n)$ denote random variables for the strings of the $n$
collected bits on both sides. Any wiring of $n$ shared copies of a
system then fully determines a joint distribution $W_{AB}$ on the space
of all sequences $(a,b)\in\{0,1\}^{n\times n}$. Alice and Bob can
condition their wirings on the inputs $x$ and $y$. We denote a system of
wiring distributions on $n$ copies of the binary system $P$ by $P^n$. We
will use the matrix representation
\[
P^n=\left[\begin{array}{ll}W_{AB}^{00}&W_{AB}^{01}
    \\
    W_{AB}^{10}&W_{AB}^{11}\end{array}\right]
\]
where $W_{AB}^{xy}$, for any $x,y\in\{0,1\}$, is a $2^n\times 2^n$
matrix with the probability $W_{AB}^{xy}(a,b)$ at position $(a,b)$. As
the final step in a distillation attempt the two parties map their
inputs $x$ and $y$ and the collected strings $a$ and $b$ to one local
bit each. For all $x,y$ let $f_x,g_y:\{0,1\}^n\rightarrow\{0,1\}$ stand
for these local Boolean functions. It will often be convenient to write
them as truth tables $f_x,g_y\in\{0,1\}^{2^n}$. Now we have all the
ingredients to define a distillation protocol on binary systems.

\begin{definition}
  A deterministic nonlocality distillation protocol on $n$ copies of a
  system $P$, denoted $DP=(P^n,f_0,f_1,g_0,g_1)$, consists of a system
  of wiring distributions $P^n$ and local decision functions $f_0,f_1$
  and $g_0,g_1$ for both inputs and both sides.
\end{definition}

We refer to \cite{CHSH-1969} for an exact formulation of the CHSH
nonlocality and to \cite{fww-2009} for the definition of the CHSH
nonlocality of a single binary system $NL(P)$ as required here. Note
that $NL(P)>2$ indicates that $P$ can be used to violate a CHSH
inequality and is therefore called nonlocal. To measure the violation of
a CHSH inequality by a distillation protocol we use a slightly
different representation. Observe that for each input pair $x,y$ the
correlation function of the binary system simulated by a distillation
protocol can be described by the following inner product:
\begin{align*}
  \langle
  f_x,g_y\rangle&=\mathbb{E}[f_x(a)=g_y(b)]-\mathbb{E}[f_x(a)\neq
  g_y(b)]
  \\
  &=\sum_{ab}W_{AB}^{xy}(a,b)(1-2f_x(a))(1-2g_y(b))
  \\
  &=(1-2f_x)^TW_{AB}^{xy}(1-2g_y).
\end{align*}
The two expectation values are calculated over all pairs $(a,b)\sim
W_{AB}^{xy}$. The nonlocality of a protocol is the maximal violation of
a CHSH inequality by the simulated system.
\begin{definition}\label{d6}
  The CHSH value of a distillation protocol $DP=(P^n,f_0,f_1,g_0,g_1)$
  is measured by
  \[
  NL(DP)=\max_{xy}\left|\langle f_x,g_y\rangle+\langle
    f_{\bar{x}},g_y\rangle+\langle f_x,g_{\bar{y}}\rangle-\langle
    f_{\bar{x}},g_{\bar{y}}\rangle\right|
  \]
  where we use $\bar{x}$ and $\bar{y}$ to indicate bit flips, that is,
  $\bar{0}=1$ and $\bar{1}=0$. Note that $NL(DP)>2$ indicates that the
  protocol simulates a nonlocal system, whereas $NL(DP)>NL(P)$ indicates
  a successful distillation attempt.
\end{definition}

We will henceforth assume that the maximum of this expression is reached
at $xy=00$. Our reasoning can easily be extended to all four cases.

Motivated by the discovery of successful distillation protocols for CHSH
nonlocality another measure has been proposed that expresses the
nonlocality of the optimal protocol on a given resource. Brunner et
al.~\cite{PhysRevLett.106.020402} have recently formally defined and
analyzed distillable nonlocality in comparison with the EPR2
decomposition~\cite{Elitzur199225,fitzi2010}. Here, we use a slightly
different notation that expresses the same idea.
\begin{definition}
  The distillable nonlocality of $n$ copies of a system $P$ is defined
  as
  \begin{align}
    D(n,P)=\max_{DP}NL(DP),\label{dist nonl}
  \end{align}
  where we maximize over all deterministic nonlocality distillation
  protocols on $n$ copies of $P$.
\end{definition}
For determining distillable nonlocality precisely by an exhaustive search
over the space of all possible sequences $(P^n,f_0,f_1,g_0,g_1)$ one
requires to test roughly
$\left(\prod_{i=0}^n2^{2^{i+1}}\right)^2=2^{\sum_{i=0}^n2^{i+2}}\in
2^{2^{O(n)}}$ instances, calculated as the product of all involved
Boolean functions. Because this is infeasible with normal hardware
already for $n>2$, giving an asymptotic bound to $D(n,P)$ instead which
can be computed more efficiently, is a reasonable compromise.

\section{Results}

First, we show an upper bound on $D(n,P)$ if $P$ is an isotropic system
(Corollary \ref{t1}) and move on to a general upper bound afterward (Corollary
\ref{t2}).  We start by manipulating (\ref{dist nonl}) to reveal the
core of the task. One can group the Boolean functions $f_0,f_1,g_0,g_1$,
over which we optimize in (\ref{dist nonl}), by their output
distribution. For a given number $n$, a class $C_{k}$ of Boolean
functions is defined as
\[
C_{k}=\{f:\{0,1\}^n\rightarrow\{0,1\}:|f^{-1}(1)|=k\}.
\]
We can now rewrite the distillable nonlocality as
\[
D(n,P)=\max_{0\leq k^x,l^y\leq 2^n}\max_{DP}NL(DP),
\]
where for each $k^x,l^y$ we maximize over all distillation protocols
restricted by $f_x\in C_{k^x}$ and $g_y\in C_{l^y}$ for all $x,y$. Now
concentrate on $\max_{DP}NL(DP)$ with fixed $k^x$ and $l^y$ for all
$x,y$. According to Definition \ref{d6}, this subproblem boils down to
finding the system of wiring distributions $P^n$ and functions $f_0\in
C_{k^0},f_1\in C_{k^1},g_0\in C_{l^0},g_1\in C_{l^1}$, such that the
term
\[
\langle f_0,g_0\rangle+\langle f_1,g_0\rangle+\langle
f_0,g_1\rangle-\langle f_1,g_1\rangle
\]
is minimal/maximal. Since for any $x,y$ we have
\begin{align*}
  \langle
  f_x,g_y\rangle&=(1-2f_x)^TW_{AB}^{xy}(1-2g_y)\\
  &=1-k^x/2^{n-1}-l^y/2^{n-1}+4f_x^TW_{AB}^{xy}g_y,
\end{align*}
bounds for the correlation functions $\langle f_x,g_y\rangle$ can be
derived from optimizing $f^TW_{AB}g$, with $W_{AB}$ a wiring
distribution and $f\in C_{k^x}$, $g\in C_{l^y}$, independently for all
$x,y$. We will derive bounds to $f^TW_{AB}g$ by slightly relaxing the
constraint that $W_{AB}$ needs to be a distribution obtained from a
wiring. This relaxation has the following background:

Suppose fixed functions $f\in C_{k^x},g\in C_{l^y}$ and a fixed wiring
distribution $W_{AB}$. Let $(a_1,b_1)$ be the outputs of the first
system Alice accesses in this wiring. The $k^x$ preimages of $1$ under
$f$ are split into two parts with reference to the output bit $a_1$ and
the $l^y$ preimages of $1$ under $g$ are split into two parts according
to $b_1$, i.e., we have
\[
k^x=|\{f^{-1}(1):a_1=0\}|+|\{f^{-1}(1):a_1=1\}|
\]
and
\[
l^y=|\{g^{-1}(1):b_1=0\}|+|\{g^{-1}(1):b_1=1\}|.
\]
The wiring $W_{AB}$ provides a probability distribution on $(a_1,b_1)$,
possibly conditioned on outputs of some other shared systems. Roughly
speaking we will find the optimal splitting of $f,g$ and the optimal
distribution for the output pair $(a_1,b_1)$ by considering the optimal
splittings and optimal distributions in the four subproblems, where the
pair $(a_1,b_1)$ is assumed to have fixed values. In this way we obtain
a recursive $n$-level hierarchy of optimization problems. The following
definition makes this idea precise.

A problem instance is given by the tuple $(P,n,k,l)$, with a binary
system $P$, integers $n>1$ and $0 \leq k,l\leq 2^n$. The recursive
optimization to solve is:

\begin{definition}\label{problem}
  Given any instance $(P,n,k,l)$ fix the parameter $p=P_{AB}^{00}(0,0)$
  and find the values
  \[
  \delta_n^+(k,l)=
  \begin{cases}
    \max_{ij}
    &p[\delta_{n-1}^+(i,j)+\delta_{n-1}^+(k-i,l-j)]\\
    &\hspace{-0.8cm}+(\frac{1}{2}-p)[\delta_{n-1}^+(i,l-j)+\delta_{n-1}^+(k-i,j)],\\
    &\text{if }n>0,\\
    kl,&\text{otherwise,}
  \end{cases}
  \]
  and
  \[
  \delta_n^-(k,l)=
  \begin{cases}
    \min_{ij}&p[\delta_{n-1}^-(i,j)+\delta_{n-1}^-(k-i,l-j)]\\
    &\hspace{-0.8cm}+(\frac{1}{2}-p)[\delta_{n-1}^-(i,l-j)+\delta_{n-1}^-(k-i,j)],\\
    &\text{if }n>0,\\
    kl,&\text{otherwise,}
  \end{cases}
  \]
  where we optimize over the ranges $k-\min(k,2^{n-1})\leq
  i\leq\min(k,2^{n-1})$ and $l-\min(l,2^{n-1})\leq j\leq
  \min(l,2^{n-1})$.
\end{definition}

Solutions $\delta_n^+(k^x,l^y)$ and $\delta_n^-(k^x,l^y)$ to the
instance $(P,n,k^x,l^y)$ capture the possibilities we have when
constructing a wiring distribution $W_{AB}$ and functions $f\in
C_{k^x}$, $g\in C_{l^y}$, such that the product $f^TW_{AB}g$ is
optimal. Since in the above optimization the subproblems on the same
level are handled independently we get better solutions than with
distributions from real wirings, where this independence is not given.

\begin{lem}\label{mainLemma}
  Let $P$ be any isotropic system and suppose integers $0\leq k,l\leq
  2^n$. For any wiring distribution $W_{AB}$ on $n$ copies of $P$ and
  functions $f\in C_{k},g\in C_{l}$ we have
  \begin{align*}
    \delta_n^-(k,l)\leq f^TW_{AB}g\leq\delta_n^+(k,l),
  \end{align*}
  where $\delta_n^-(k,l)$ and $\delta_n^+(k,l)$ are the solutions to
  $(P,n,k,l)$.
\end{lem}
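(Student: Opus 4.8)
The plan is to induct on $n$, the number of copies of $P$. The base case $n=0$ (or $n=1$, depending on where Definition~\ref{problem} bottoms out) is immediate: if $f\in C_k$ and $g\in C_l$ are constants or single-bit functions and $W_{AB}$ is the $1\times 1$ (or trivial) wiring distribution, then $f^TW_{AB}g$ reduces to a product of the number of $1$-preimages, namely $kl$, which matches the ``otherwise'' branch $\delta_n^{\pm}(k,l)=kl$ exactly. For the inductive step, I would fix a wiring distribution $W_{AB}$ on $n$ copies and functions $f\in C_k$, $g\in C_l$, and single out the first system each party accesses, whose outputs are $(a_1,b_1)$. The key structural observation — already foreshadowed in the text preceding the lemma — is that conditioning on $(a_1,b_1)$ decomposes everything: Alice's remaining wiring on copies $2,\dots,n$, conditioned on $a_1$, is itself a wiring distribution on $n-1$ copies of $P$ (this is where we use that a wiring of $n$ copies, with the first output fixed, induces a legitimate wiring of the remaining $n-1$), and likewise for Bob conditioned on $b_1$.

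Concretely, I would write
\[
f^TW_{AB}g=\sum_{a_1,b_1\in\{0,1\}}\Pr[a_1,b_1]\,\bigl(f^{(a_1)}\bigr)^T W_{AB}^{(a_1,b_1)}\,g^{(b_1)},
\]
where $f^{(a_1)}$ is the restriction of $f$ to inputs whose first coordinate is $a_1$ (a function on $\{0,1\}^{n-1}$), similarly $g^{(b_1)}$, and $W_{AB}^{(a_1,b_1)}$ is the conditional wiring distribution on the remaining $n-1$ copies. Writing $i=|\{f^{-1}(1):a_1=0\}|$ and $j=|\{g^{-1}(1):b_1=0\}|$, we have $f^{(0)}\in C_i$, $f^{(1)}\in C_{k-i}$, $g^{(0)}\in C_j$, $g^{(1)}\in C_{l-j}$. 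For an isotropic $P$ the nonsignaling and symmetry constraints force $\Pr[a_1,b_1]$ to depend only on whether $a_1=b_1$: one gets $\Pr[0,0]=\Pr[1,1]=p$ and $\Pr[0,1]=\Pr[1,0]=\tfrac12-p$ with $p=P^{00}_{AB}(0,0)$ — I would verify this marginal computation explicitly, since it is what links the recursion's coefficients to the actual box. Applying the induction hypothesis to each of the four conditional terms and using monotonicity of the sum gives $f^TW_{AB}g\le p[\delta_{n-1}^+(i,j)+\delta_{n-1}^+(k-i,l-j)]+(\tfrac12-p)[\delta_{n-1}^+(i,l-j)+\delta_{n-1}^+(k-i,j)]$, and since $i,j$ fall in the stated ranges (each block has at most $2^{n-1}$ inputs, and the complementary block has at most $2^{n-1}$ too, forcing $k-2^{n-1}\le i\le 2^{n-1}$ and similarly for $j$), the right-hand side is at most $\delta_n^+(k,l)$ by definition. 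The lower bound via $\delta_n^-$ is symmetric, replacing $\max$/$\le$ by $\min$/$\ge$ throughout.

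The main obstacle, I expect, is the bookkeeping around \emph{disordered} wirings: the text explicitly allows wirings where the temporal order differs on the two sides, so ``the first system Alice accesses'' and ``the first system Bob accesses'' need not be the same copy. I would handle this by picking, say, the first copy Alice accesses, call it copy~$1$; Bob's access to copy~$1$ may be early or late in his circuit, but nonsignaling guarantees that the joint distribution of $(a_1,b_1)$ is still well-defined and equals the box marginal, and that conditioning on $(a_1,b_1)$ leaves Bob with a valid (possibly disordered) wiring on the other $n-1$ copies — this is exactly the content of the general-wiring framework of~\cite{pironio-2005}. Making this conditioning argument airtight — that the four conditional objects are genuinely wiring distributions on $n-1$ copies and that no signaling is introduced — is the crux; once it is in place, the inductive inequality and the range constraints are routine. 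A secondary point to check is that the optimization ranges in Definition~\ref{problem} are not merely convenient but necessary, i.e.\ that every attainable $(i,j)$ respects them, which follows from $0\le i\le\min(k,2^{n-1})$ and $0\le k-i\le\min(k,2^{n-1})$, equivalently $k-\min(k,2^{n-1})\le i\le\min(k,2^{n-1})$.
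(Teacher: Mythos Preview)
Your inductive skeleton matches the paper's, and you correctly flag disordered wirings as the crux. But your proposed resolution of that crux contains a genuine gap. You assert that ``nonsignaling guarantees that the joint distribution of $(a_1,b_1)$ is still well-defined and equals the box marginal,'' and that conditioning on $(a_1,b_1)$ leaves a valid $(n-1)$-copy wiring. Neither claim survives in general. Take the first copy Alice touches; her input $x_A$ is indeed constant, but Bob's input to that same copy can be a function $y_B(b_{\bar 1})$ of his other outputs. For an isotropic box, $P^{x_A,y}_{AB}(0,0)$ equals $p$ when $x_Ay=0$ and $\tfrac12-p$ when $x_Ay=1$, so if $x_A=1$ and $y_B$ is nonconstant the marginal $\Pr[a_1=0,b_1=0]$ is a genuine mixture of $p$ and $\tfrac12-p$, not $p$. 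Worse, conditioning on $(a_1,b_1)=(0,0)$ then biases $b_{\bar 1}$ toward values with $y_B(b_{\bar 1})=0$; the resulting conditional law on $(a_{\bar 1},b_{\bar 1})$ has non-uniform single-box marginals and is \emph{not} a wiring distribution on $n-1$ isotropic copies, so the induction hypothesis does not apply.

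The paper repairs exactly this point by refusing to split on the raw bit $b_1$. Instead it introduces a Boolean function $h$ of Bob's other outputs with the defining property $W_{A_1B_1|B_{\bar 1}}(0,h(b_{\bar 1})\,|\,b_{\bar 1})=p$ for every $b_{\bar 1}$ (concretely $h(b_{\bar 1})=x_A\cdot y_B(b_{\bar 1})$), and splits $g$ as $g'(b_{\bar 1})=g(h(b_{\bar 1}),b_{\bar 1})$, $g''(b_{\bar 1})=g(\bar h(b_{\bar 1}),b_{\bar 1})$. With this $h$-dependent split each of the four pieces carries weight exactly $p$ or $\tfrac12-p$, and the conditional on $(a_{\bar 1},b_{\bar 1})$ is obtained by replacing box~1 with the local rule ``Alice outputs the fixed bit, Bob outputs $h(b_{\bar 1})$ (resp.\ $\bar h(b_{\bar 1})$)''---a bona fide $(n-1)$-copy wiring. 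Your decomposition by the raw pair $(a_1,b_1)$ would be fine for ordered wirings (where $h$ is constant), but to cover the disordered case you need this relabeling; without it the weights do not match Definition~\ref{problem} and the inductive hypothesis cannot be invoked.
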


\begin{proof}
  The idea for proving the upper bound is to decompose $W_{AB}$ into
  four subwirings and $f,g$ into two functions each, on which we then
  inductively prove the statement. The decomposition is done with
  reference to the first system Alice accesses, called $P_1$, its
  outputs denoted $(a_1,b_1)$. We write $a_{\bar{1}}$ for the string
  $(a_2,...,a_n)$ and $b_{\bar{1}}$ for $(b_2,...,b_n)$ and accordingly
  we use the random variables $A_{\bar{1}}=(A_2,...,A_n)$ and
  $B_{\bar{1}}=(B_2,...,B_n)$.

  We fix the probability $p=P_{AB}^{00}(0,0)$. Let the Boolean function
  $h:\{0,1\}^{n-1}\rightarrow \{0,1\}$ determine a bit depending on Bobs
  local outputs $b_{\bar{1}}=(b_2,...,b_n)$ of the $n-1$ systems
  $P_2,...,P_n$ shared with Alice, such that
  \[
  W_{A_1B_1|B_{\bar{1}}}(0,h(b_{\bar{1}})|b_{\bar{1}})=p
  \]
  for all $b_{\bar{1}}\in\{0,1\}^{n-1}$. Note that if the wiring
  defining $W_{AB}$ is ordered then Alice's and Bobs inputs into the
  first system are constant and thus $h$ is also constant. Otherwise,
  only Alice's input is guaranteed to be constant. In this case $h$ is
  used to react to possible changes caused by Bobs input into the first
  system. $h$ always exists since isotropic systems have symmetric
  output distributions, and therefore for any $b_{\bar{1}}$ the first
  system $P_1$ outputs either $(0,0)$ or $(0,1)$ with probability $p$.

  According to the result of $h$ we split $W_{AB}$ into four parts by
  grouping all probabilities $W_{AB}(a,b)$ by four possible outcome
  pairs of the first system. In the same manner we split the functions
  $f,g$ into the subfunctions $f'(a_{\bar{1}})=f(0,a_{\bar{1}})$ and
  $f''(a_{\bar{1}})=f(1,a_{\bar{1}})$ for any $a_{\bar{1}}$ and
  $g'(b_{\bar{1}})=g(h(b_{\bar{1}}),b_{\bar{1}})$ and
  $g''(b_{\bar{1}})=g(\bar{h}(b_{\bar{1}}),b_{\bar{1}})$ for any
  $b_{\bar{1}}$. Applied to $f'W_{AB}g$ we can identify four parts as
  \begin{align}
    f^TW_{AB}g&=\sum_{ab}f(a)W_{AB}(a,b)g(b)\notag
    \\
    &=\sum_{a_{\bar{1}}b_{\bar{1}}}\left[f'(a_{\bar{1}})W_{AB}(0a_{\bar{1}},h(b_{\bar{1}})b_{\bar{1}})g'(b_{\bar{1}})\right.\notag
    \\
    &~~~~~~+f''(a_{\bar{1}})W_{AB}(1a_{\bar{1}},\bar{h}(b_{\bar{1}})b_{\bar{1}})g''(b_{\bar{1}})
    \notag\\
    &~~~~~~+f''(a_{\bar{1}})W_{AB}(1a_{\bar{1}},h(b_{\bar{1}})b_{\bar{1}})g'(b_{\bar{1}})\notag
    \\
    &~~~~~~+\left.f'(a_{\bar{1}})W_{AB}(0a_{\bar{1}},\bar{h}(b_{\bar{1}})b_{\bar{1}})g''(b_{\bar{1}})\right].\notag
  \end{align}
  Now observe the probability
  $W_{AB}(0a_{\bar{1}},h(b_{\bar{1}})b_{\bar{1}})$. By the definition of
  the function $h$ we can derive
  \begin{align*}
    W_{A_1B_1}(0,h(b_{\bar{1}}))=\sum_{b_{\bar{1}}}W_{B_{\bar{1}}}(b_{\bar{1}})
    W_{A_1B_1|B_{\bar{1}}}(0,h(b_{\bar{1}})|b_{\bar{1}})=p
  \end{align*}
  and therefore, if we factor out the distribution of the first output
  pair $(a_1,b_1)$ we get
  \begin{align*}
    W_{AB}(0a_{\bar{1}},h(b_{\bar{1}})b_{\bar{1}})=pW_{A_{\bar{1}}B_{\bar{1}}|A_1B_1}(a_{\bar{1}},b_{\bar{1}}|0,h(b_{\bar{1}})).
  \end{align*}
  For any $a_{\bar{1}},b_{\bar{1}}$ the probability
  $W_{A_{\bar{1}}B_{\bar{1}}|A_1B_1}(a_{\bar{1}},b_{\bar{1}}|0,h(b_{\bar{1}}))$
  is independent of the distribution of the first output pair. That
  means $P_1$ can as well be replaced by local circuitry. An alternative
  wiring distribution is obtained from the original wiring by fixing
  $a_1=0$ and $b_1=h(b_{\bar{1}})$ for any
  $a_{\bar{1}},b_{\bar{1}}$. Therefore, there exists a wiring
  distribution $W_{A_{\bar{1}}B_{\bar{1}}}$ on $n-1$ systems, for
  example the one described above, such that for each
  $a_{\bar{1}},b_{\bar{1}}$ we have
  \[
  W_{AB}(0a_{\bar{1}},h(b_{\bar{1}})b_{\bar{1}})=pW_{A_{\bar{1}}B_{\bar{1}}}(a_{\bar{1}},b_{\bar{1}}).
  \]
  The above reasoning can be applied to all four parts of $f^TW_{AB}g$
  implying the existence of wiring distributions
  $W_{A_{\bar{1}}B_{\bar{1}}}^{(1)},W_{A_{\bar{1}}B_{\bar{1}}}^{(2)},W_{A_{\bar{1}}B_{\bar{1}}}^{(3)}$
  and $W_{A_{\bar{1}}B_{\bar{1}}}^{(4)}$, such that
  \begin{align*}
    f^TW_{AB}g&=p\left(f'^TW_{A_{\bar{1}}B_{\bar{1}}}^{(1)}g'+f''^TW_{A_{\bar{1}}B_{\bar{1}}}^{(2)}g''\right)\\
    &+(1/2-p)\left(f''^TW_{A_{\bar{1}}B_{\bar{1}}}^{(3)}g'+f'^TW_{A_{\bar{1}}B_{\bar{1}}}^{(4)}g''\right).
  \end{align*}
  Assume now the induction hypothesis that for any wiring distribution
  $W_{A_{\bar{1}}B_{\bar{1}}}$ on $n-1$ isotropic systems and functions
  $f,g:\{0,1\}^{n-1}\rightarrow\{0,1\}$ we have the upper bound
  \[
  f^TW_{A_{\bar{1}}B_{\bar{1}}}g\leq\delta_{n-1}^+(|f^{-1}(1)|,|g^{-1}(1)|).
  \]
  Since the basis $f^TP_{AB}g\leq \delta^+_1(|f^{-1}(1)|,|g^{-1}(1)|)$
  trivially holds, we can conclude
  \begin{align*}
    f^TW_{AB}g&\leq p\delta_{n-1}^+(|f'^{-1}(1)|,|g'^{-1}(1)|)\\
    &~~~+p\delta_{n-1}^+(|f''^{-1}(1)|,|g''^{-1}(1)|)\\
    &~~~+(1/2-p)\delta_{n-1}^+(|f'^{-1}(1)|,|g''^{-1}(1)|)\\
    &~~~
    +(1/2-p)\delta_{n-1}^+(|f''^{-1}(1)|,|g'^{-1}(1)|)\\
    &\leq\delta_n^+(k,l).
  \end{align*}
  We used that $0\leq |f'^{-1}(1)|,|f''^{-1}(1)|\leq \min(k,2^{n-1})$
  and $0\leq |g'^{-1}(1)|,|g''^{-1}(1)|\leq \min(l,2^{n-1})$ holds. With
  the same arguments one can prove the lower bound
  $f^TW_{AB}g\geq\delta_n^-(k,l)$ by induction.\end{proof}

The shown bounds to $f^TW_{AB}g$ yield the following bound to the
distillable nonlocality of isotropic systems:
\begin{cor}\label{t1}
  Let $P_{iso}$ be an isotropic system; then
  \begin{align*}
    D(n,P_{iso})\leq&\max_{0\leq k^x,l^y\leq
      2^n} 2-\frac{(k^0+l^0)}{2^{n-2}}+4[\delta_n^+(k^0,l^0)\\
    & +\delta_n^+(k^0,l^1)+\delta_n^+(k^1,l^0)-\delta_n^-(k^1,l^1)].
  \end{align*}
  where $\delta_n^+(k^x,l^y)$ and $\delta_n^-(k^x,l^y)$ are solutions to
  the problem instance $(P_{iso},n,k^x,l^y)$ for all $x,y$.
\end{cor}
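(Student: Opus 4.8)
The plan is to start from the reformulation of distillable nonlocality obtained just before Definition \ref{problem}, namely
\[
D(n,P_{iso})=\max_{0\leq k^x,l^y\leq 2^n}\ \max_{DP}NL(DP),
\]
where the inner maximum runs over distillation protocols on $n$ copies of $P_{iso}$ restricted by $f_x\in C_{k^x}$ and $g_y\in C_{l^y}$ for all $x,y$, and where — invoking the standing assumption that the maximum in Definition \ref{d6} is attained at $xy=00$, together with its routine extension to the sign of the CHSH sum and to which correlator carries the minus sign — we may take
\[
NL(DP)=\langle f_0,g_0\rangle+\langle f_1,g_0\rangle+\langle f_0,g_1\rangle-\langle f_1,g_1\rangle .
\]
First I would substitute the affine identity $\langle f_x,g_y\rangle=1-k^x/2^{n-1}-l^y/2^{n-1}+4f_x^TW_{AB}^{xy}g_y$ into this expression. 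The four constants $1+1+1-1$ collapse to $2$, the $k$-dependent terms collapse to $-k^0/2^{n-2}$ and the $l$-dependent terms to $-l^0/2^{n-2}$, leaving
\[
NL(DP)=2-\frac{k^0+l^0}{2^{n-2}}+4\bigl(f_0^TW_{AB}^{00}g_0+f_1^TW_{AB}^{10}g_0+f_0^TW_{AB}^{01}g_1-f_1^TW_{AB}^{11}g_1\bigr).
\]

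The next step is the one that actually consumes the work done so far. By construction of $P^n$ each block $W_{AB}^{xy}$ is a wiring distribution on $n$ copies of the isotropic system $P_{iso}$, and $f_x\in C_{k^x}$, $g_y\in C_{l^y}$; hence Lemma \ref{mainLemma} applies to each of the four blocks separately, with all the $\delta_n^{\pm}$ referring to instances $(P_{iso},n,\cdot,\cdot)$ that share the same fixed $p=P_{AB}^{00}(0,0)$. It yields $f_x^TW_{AB}^{xy}g_y\leq\delta_n^+(k^x,l^y)$ for the three blocks entering with a plus sign, and $f_1^TW_{AB}^{11}g_1\geq\delta_n^-(k^1,l^1)$ for the block entering with a minus sign. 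Substituting these four inequalities bounds $NL(DP)$, for fixed class sizes, by
\[
2-\frac{k^0+l^0}{2^{n-2}}+4\bigl(\delta_n^+(k^0,l^0)+\delta_n^+(k^1,l^0)+\delta_n^+(k^0,l^1)-\delta_n^-(k^1,l^1)\bigr),
\]
and taking the maximum over all $0\leq k^x,l^y\leq 2^n$ gives exactly the claimed bound.

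I do not expect a genuine obstacle — the analytic content is entirely in Lemma \ref{mainLemma} — but two bookkeeping points deserve a word, and one of them is the closest thing to a ``subtlety''. First, the four correlators are bounded independently, whereas in a real protocol the blocks $W_{AB}^{xy}$ come from a single underlying wiring (conditioned on $x,y$) and, e.g., $f_0$ is literally the same function in the $(0,0)$ and $(0,1)$ correlators; I would stress that treating them independently only enlarges the feasible set and therefore only loosens the inequality, so the result remains a valid upper bound on $NL(DP)$. Second, the absolute value in $NL(DP)$ and the choice of which correlator is subtracted are absorbed by the outer maximum: flipping an output, $f_x\mapsto\bar f_x$ or $g_y\mapsto\bar g_y$, changes the sign of the affected correlators and sends the corresponding class index to its complement in $\{0,\dots,2^n\}$, so after such a relabelling the CHSH sum is brought to the form treated above while all class sizes stay within the range of the maximum. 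This is precisely the ``extension to all cases'' already flagged right after Definition \ref{d6}.
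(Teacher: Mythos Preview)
Your proof is correct and follows the same route as the paper's own argument: drop the modulus in $NL(DP)$ via the output-flipping symmetry (the paper cites Lemma~\ref{al3} for this, you spell it out directly), rewrite each correlator through the affine identity $\langle f_x,g_y\rangle=1-k^x/2^{n-1}-l^y/2^{n-1}+4f_x^TW_{AB}^{xy}g_y$, and then bound each of the four bilinear terms independently by Lemma~\ref{mainLemma}. Your explicit bookkeeping of the constant and linear parts, and your remark that bounding the four blocks separately only enlarges the feasible set, are exactly the details the paper leaves to the reader.
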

\begin{proof} Note that from the definition of the inner product it
  follows that it is sufficient to consider positive values to find the
  maximum (see Lemma \ref{al3} in Appendix \ref{a3}). Therefore we can
  omit the modulus in the CHSH value and get
  \begin{align*}
    D(n,P)&=\max_{0\leq k^x,l^y\leq 2^n}\max_{DP}~~~NL(DP)\\
    &=\max_{0\leq k^x,l^y\leq 2^n}\max_{DP}~~~\langle
    f_0,g_0\rangle+\langle f_1,g_0\rangle\\
    &~~~~~~~~~~~~~~~~~~~~~~~~~~ +\langle f_0,g_1\rangle-\langle
    f_1,g_1\rangle
  \end{align*}
  where the second maximum is over $f_x\in C_{k^x}$ and $g_y\in C_{l^y}$
  for all $x,y$. To complete the proof use $\langle
  f_x,g_y\rangle=1-k^x/2^{n-1}-l^y/2^{n-1}+4f_x^TW_{AB}^{xy}g_y$ for all
  $x,y$ and apply Lemma \ref{mainLemma}. \end{proof}

From the bound on distillable nonlocality of isotropic resources we
derive a general bound as follows.

\begin{cor}\label{t2}
  For any $P$, we have $D(n,P)\leq D(n,P_{iso})$ where $P_{iso}$ shall
  be isotropic and minimize $NL(P_{iso})$ under the constraint
  $P=qP_{iso}+P_{L}(1-q)$ for any $q\in[0,1]$ and any local system
  $P_L$.
\end{cor}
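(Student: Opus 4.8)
The plan is to show that any distillation protocol on $n$ copies of $P$ can be converted into a protocol on $n$ copies of $P_{iso}$ achieving the same CHSH value, which immediately gives $D(n,P)\le D(n,P_{iso})$. The key observation is the decomposition $P = qP_{iso} + (1-q)P_L$: Alice and Bob can locally simulate a copy of $P$ whenever they are handed a copy of $P_{iso}$, simply by flipping a shared (or, by the convexity remark in the Preliminaries, a private) coin that comes up heads with probability $q$; on heads they feed their inputs into the $P_{iso}$ box and output what it returns, on tails they discard the box and instead sample their outputs from the local system $P_L$. Since $P_L$ is local it admits a local-hidden-variable model, so each party can produce their share of the $P_L$ outcome without communication. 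Thus each copy of $P_{iso}$ can be turned into a copy of $P$ by purely local operations.

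Concretely, I would take an optimal deterministic protocol $DP=(P^n,f_0,f_1,g_0,g_1)$ with $NL(DP)=D(n,P)$. Replace the $n$ shared copies of $P$ by $n$ shared copies of $P_{iso}$. Prepend to the wiring the local pre-processing just described — this is itself a wiring (local classical circuitry with access to shared randomness, which we may take deterministic by convexity) — so that the composite wiring on the $P_{iso}$ copies reproduces exactly the wiring distribution $P^n$ that $DP$ used. Appending the same decision functions $f_x,g_y$ yields a protocol $DP'$ on $n$ copies of $P_{iso}$ with $NL(DP')=NL(DP)=D(n,P)$. Hence $D(n,P_{iso})\ge NL(DP') = D(n,P)$. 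One subtlety to address: the definition of a distillation protocol wires exactly $n$ copies, so I should note that the local simulation of $P$ from $P_{iso}$ consumes one box and produces one box and introduces only local circuitry, so the copy count is preserved; alternatively, absorb the $P_L$-simulation into the decision functions via extra shared randomness and appeal again to convexity.

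The main obstacle, and the part deserving the most care, is making precise that the local pre-processing composes with a general (possibly disordered) wiring to again give a valid general wiring on the $P_{iso}$ copies — i.e. that the class of wirings is closed under this kind of local composition — and that the resulting object is still a legitimate $P_{iso}^n$ in the sense of Definition 1. Since each simulated $P$-box responds to its input immediately (using only the local coin and, on the tails branch, the local hidden variable for $P_L$) without reference to the other box, no signaling or temporal-order issues are introduced; the simulation is strictly "inside" each individual box and therefore commutes with whatever disordered wiring is built on top. Once that closure point is granted, the rest is the short argument above. I would also remark that the choice of $P_{iso}$ minimizing $NL(P_{iso})$ over all such decompositions is what makes the bound as strong as possible, but any single valid decomposition already yields a correct (if weaker) upper bound, so the existence of the minimizer is not needed for correctness — only the fact that the feasible set of decompositions is nonempty and $NL$ is continuous on the (compact) set of isotropic systems consistent with a fixed $P$.
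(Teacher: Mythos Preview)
Your argument is correct and is essentially the same as the paper's: both simulate each copy of $P$ from a copy of $P_{iso}$ by locally mixing in the local system $P_L$, so any protocol on $n$ copies of $P$ lifts to one on $n$ copies of $P_{iso}$ with the same CHSH value. The paper phrases this as a three-line contradiction and omits the closure and randomness subtleties you spell out, but the underlying idea is identical.
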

\begin{proof}
  By contradiction. Assume $n$ $P$'s can be distilled to a value higher
  than $D(n,P_{iso})$. Given $n$ $P_{iso}$'s we can construct a
  distillation protocol as follows: Combine each of the $n$ $P_{iso}$'s
  with one local system $P_L$ to get the $n$ mixtures
  $P=qP_{iso}+(1-q)P_L$. Since we assumed that $n$ $P$'s can be
  distilled above $D(n,P_{iso})$ we constructed a protocol on $n$
  $P_{iso}$'s that achieves the same.
\end{proof}
Bounding $D(n,P_{iso})$ by Corollary \ref{t1} yields an exponential
speedup compared to the naive search through all protocols. The idea is
to use dynamic programming to solve the problem in Definition
\ref{problem} in run-time $2^{O(n)}$ (Lemma \ref{al1} in Appendix
\ref{a1}). Bounding $D(n,P)$ by Corollary \ref{t2} requires finding the
least nonlocal isotropic system $P_{iso}$, such that $P$ can be obtained
from a convex combination of $P_{iso}$ and a local system $P_L$. This
can be done efficiently using a linear program (see Lemma \ref{al2}
Appendix \ref{a2}). Finally we calculate a bound on $D(n,P_{iso})$ as
described above to complete the bound to $D(n,P)$.

To compare the results, an exhaustive search to determine $D(n,P)$
exactly requires an effort of $2^{2^{O(n)}}$ -- an exponential increase
to our solution.

\section{Application}

Here, we illustrate Corollaries \ref{t1} and \ref{t2} on example
applications. For this purpose consider a two-dimensional section in the
binary nonsignaling polytope defined as the convex combination of a
nonlocal vertex $P_{NL}$ (a PR-box), an unbiased mixture of two local
vertices $P_C$ (perfectly correlated random bits) and the local
isotropic system $P_F$ (lying on the CHSH facet corresponding to
$P_{NL}$)~\cite{brunner2009,Hoyer2010,PhysRevLett.106.020402} . We mix
as:
\[
P_{\e,\delta}=\e P_{NL}+\delta P_C+(1-\e-\delta)P_F,
\]
where $\e\in[0,1]$, $\delta\in[0,1]$ and $\e+\delta\leq 1$ to ensure
nonsignaling. By setting $\delta=0$ we obtain a set of isotropic systems
$P_{iso}(\e)=P_{\e,0}$, where $NL(P_{iso}(\e))=2(\e+1)$. These are
special when it comes to distillation because any system can be turned
into an isotropic system while preserving nonlocality (a procedure known
as depolarization~\cite{masanes-2005}). It is a long standing conjecture
that isotropic systems cannot be distilled. Strong evidence to support
this has been provided by Dukaric and Wolf~\cite{dw-2008}, who showed
that in the quantum region infinitely many examples of asymptotically
undistillable isotropic systems must exist. Also,
Short~\cite{short-2009} gave a proof for two-copy impossibility, i.e.,
he showed that $D(2,P_{iso})=NL(P_{iso})$ holds for any isotropic
system. With the new bound at hand we can deliver additional
evidence. We found that our isotropic bound (Corollary \ref{t1}),
calculated exactly for rational instances with Maple~\cite{maplesoft:_maple},
confirms $D(n^*,P_{iso}(\e))=2(\e+1)=NL(P_{iso}(\e))$ if walking on the
isotropic line with step length $\alpha$
($\e\in\{\alpha,2\alpha,...,1-\alpha\}$) in all tested cases:
\[
\begin{array}{r|lllllll}
  \alpha~&~10^{-6}&10^{-5}&10^{-4}&10^{-3}&10^{-2}&10^{-1}&4^{-1}\\\hline
  n^*~&~3&4&5&6&7&8&9
\end{array}
\]
An immediate consequence is that in these settings our bound is tight,
the optimal protocol simply imitates $P_{iso}(\e)$. Interestingly, the
bound on $D(n,P_{iso}(\e))$ drops significantly below $NL(P_{iso}(\e))$
if we restrict the protocol to use unbalanced functions, i.e., if we
demand $|f_0^{-1}(1)|,|f_1^{-1}(1)|\neq 2^{n-1}$ for example. Figure
\ref{fig2} illustrates the isotropic bound and this effect by an example
calculation.

\begin{figure}[h]
  \begin{center}
    \includegraphics*[scale=0.34]{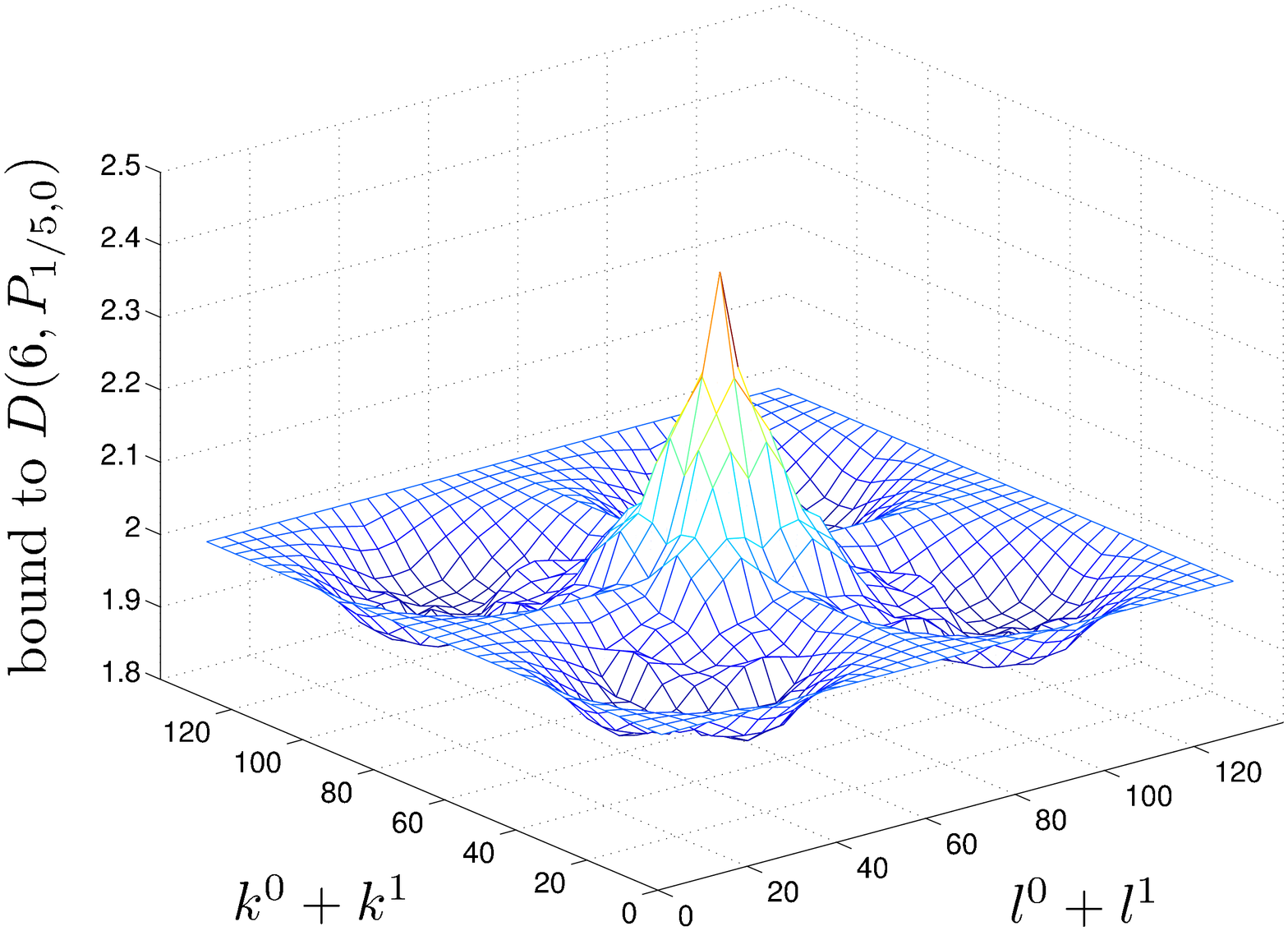}
  \end{center}
  \caption{Plotted are the calculated bounds to $D(6,P_{1/5,0})$
    ($z$-axis) for $0\leq k^0+k^1\leq 2^7$ ($x$-axis) and $0\leq
    l^0+l^1\leq 2^7$ ($y$-axis). The peak in the center is the CHSH
    value of a fully balanced protocol, i.e.,
    $|f_x^{-1}(1)|=|g_y^{-1}(1)|=2^5$ for all $x,y$. It's the only class
    that reaches $12/5$ -- exactly the value $NL(P_{1/5,0})$. Trivial
    protocols that ignore $5$ of $6$ systems achieve this value.}
  \label{fig2}
\end{figure}

Departure from the isotropic line yields another observation based on
the calculated isotropic bounds. Let $\e$ be fixed. By combining
$P_{\e,0}$ in a convex way with $P_{\e,1-\e}$ -- a system of the same
nonlocality as $P_{\e,0}$ sitting on the nonsignaling facet (also called
{\em correlated NLB}~\cite{fww-2009,brunner2009}) -- we get the system
\[
P_q=(1-q)P_{\e,0}+qP_{\e,1-\e}.
\]
The bound in Corollary \ref{t2} allows more distillation on $P_q$ the
closer the system approaches the nonsignaling facet. Starting with $q=0$
and $D(n,P_{\e,0})-NL(P_{\e,0})=0$ the difference between the bound and
$NL(P_q)$ grows with increasing weight $q$ until $D(n,P_{\e,1-\e})\leq
4$ -- an asymptotically tight bound for any $P_{\e,1-\e}$ with $\e>0$ as
confirmed by the distillation protocol in~\cite{brunner2009}.
Therefore, our tests support a suspected property of distillation
protocols, namely that, geometrically speaking, distilling a system
pushes it towards the isotropic line. The longer this distance is, the
better the system can potentially be distilled (See
Fig. \ref{fig3}). This is a property inherent to all bipartite
nonlocality distillation
protocols~\cite{fww-2009,brunner2009,Hoyer2010,PhysRevA.80.062107} known
to the author.

\begin{figure}[h]
  \includegraphics{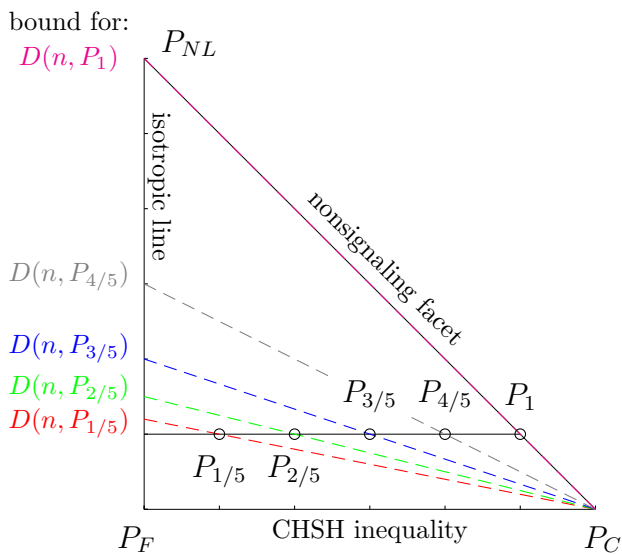}
  \caption{Here, we illustrate the idea behind the general distillation
    bound of Corollary \ref{t2} in a two-dimensional wedge. The
    distillable nonlocality of $P_q$ (for $q=1/5,2/5,3/5,4/5,1$) is
    shown to be limited by the bound to the weakest isotropic system
    $P_{iso}$, such that $P_q$ can be expressed as a convex combination
    of $P_{iso}$ and $P_C$.}
  \label{fig3}
\end{figure}

Another implication of the calculated isotropic bounds and Corollary
\ref{t2} is the existence of many sets closed under $(n<10)$-copy
wirings. We have established the following: Given the set $S(P_{iso})$
of systems that are convex combinations of a certain $P_{iso}$ and any
local system. Then $D(n,P_{iso})=NL(P_{iso})$ implies the existence of a
set $S'(P_{iso})$, where $S(P_{iso})\subseteq S'(P_{iso})$, that is
closed under $n$-copy wirings of systems in $S(P_{iso})$.

Note that the consequences of Corollary \ref{t2} are independent of how
the isotropic bounds are derived. Since the corollary expresses a
generic dependency between isotropic and general distillable nonlocality
any isotropic statement has direct implications in the general
case. Therefore, we can use known asymptotic distillation bounds to
generalize the above reasoning: Since $D(\infty,P_{iso})=NL(P_{iso})$
holds for infinitely many isotropic systems in the quantum
region~\cite{dw-2008}, the existence of infinitely many sets, each with
a different nonlocality limit, that are closed under wirings follows
directly. This answers an open question posed
in~\cite{PhysRevA.80.062107}.

\section{Conclusion}

Distillable nonlocality as a measure for the nonlocality of a given set
of correlations is sometimes preferable to the original CHSH
nonlocality, which can be amplified by local operations and is therefore
problematic in certain settings. Here, we derived bounds for the
alternative measure. Bounding distillable nonlocality is useful if the
bound can be calculated more efficiently than the exact value and is
tight enough to be meaningful. We show both properties: An increased
efficiency is achieved by a dynamic programming approach exploiting the
recursive structure of a closely related optimization problem. The bound
for isotropic systems (Corollary \ref{t1}) is tight for the
tested set of resources. It can be used to rule out distillation of up
to 9 fixed copies supporting the long standing conjecture of asymptotic
impossibility of distillation in the isotropic case. The second bound
(Corollary \ref{t2}) establishes an efficient and generic connection
between the distillable nonlocality of isotropic systems and the general
case. It explains the increasing distillation potential when departing
from the isotropic line, that has been observed in many distillation
protocols, and is tight for correlated nonlocal boxes.

The presented bounds are based on solutions to a recursive optimization
problem that is new in this context. We believe that this abstraction
can help to answer more general distillation questions and contributes
to a deeper understanding of the possibilities of such protocols. It
would be very interesting to find an efficient algorithm or an explicit
solution or bound for the defined problem. This could be an important step
towards proving general asymptotic ($n\rightarrow \infty$) bounds to
distillable nonlocality in the future.

\begin{acknowledgments}
  This work was funded by the Swiss National Science Foundation (SNSF).
\end{acknowledgments}

\appendix

\section{A symmetry of distillation protocols}\label{a3}

\begin{lem}\label{al3}
  For any deterministic distillation protocol $DP=(P^n,f_0,f_1,g_0,g_1)$
  it holds that
  \begin{align*}
    NL(P^n,f_0,f_1,g_0,g_1)&=NL(P^n,\bar{f}_0,\bar{f}_1,g_0,g_1)\\
    &
    =NL(P^n,f_0,f_1,\bar{g}_0,\bar{g}_1)\\
    & =NL(P^n,\bar{f}_0,\bar{f}_1,\bar{g}_0,\bar{g}_1)
  \end{align*}
  with $\bar{f}_x(a)=1-f_x(a)$ and $\bar{g}_y(a)=1-g_y(a)$ for all
  $a\in\{0,1\}^n$ and all $x,y\in\{0,1\}$.
\end{lem}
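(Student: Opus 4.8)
The plan is to track how each correlation function $\langle f_x,g_y\rangle$ transforms when Alice's (or Bob's) decision functions are complemented, and to observe that the transformation is uniform enough that the outer modulus and the maximum over $xy$ absorb it. The starting point is the elementary identity $1-2\bar f_x=-(1-2f_x)$, valid because $\bar f_x=1-f_x$; read through the map $f\mapsto 1-2f$ that the paper already uses, complementing a decision function is exactly componentwise negation of the corresponding $\pm1$ vector. The wiring system $P^n$, and hence every matrix $W_{AB}^{xy}$, is untouched by relabelling the final output functions, so from $\langle f_x,g_y\rangle=(1-2f_x)^TW_{AB}^{xy}(1-2g_y)$ one reads off
\[
  \langle\bar f_x,g_y\rangle=-\langle f_x,g_y\rangle,\qquad\langle f_x,\bar g_y\rangle=-\langle f_x,g_y\rangle,
\]
for all $x,y$, and consequently $\langle\bar f_x,\bar g_y\rangle=\langle f_x,g_y\rangle$.

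First I would substitute these relations into the bracketed quantity of Definition \ref{d6}. For the protocol $(P^n,\bar f_0,\bar f_1,g_0,g_1)$ and any fixed pair $xy$, each of the four terms $\langle f_x,g_y\rangle,\langle f_{\bar x},g_y\rangle,\langle f_x,g_{\bar y}\rangle,\langle f_{\bar x},g_{\bar y}\rangle$ is replaced by its negative (applying $\langle\bar f_\cdot,g_\cdot\rangle=-\langle f_\cdot,g_\cdot\rangle$ to the first argument of each), so the whole combination $\langle f_x,g_y\rangle+\langle f_{\bar x},g_y\rangle+\langle f_x,g_{\bar y}\rangle-\langle f_{\bar x},g_{\bar y}\rangle$ becomes its negative. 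Taking $|\cdot|$ kills the sign, and since this happens for all four input pairs $xy$ at once, the value of $\max_{xy}$ is unchanged; hence $NL(P^n,\bar f_0,\bar f_1,g_0,g_1)=NL(P^n,f_0,f_1,g_0,g_1)$. The identical argument with the roles of $f$ and $g$ interchanged gives the equality for $(P^n,f_0,f_1,\bar g_0,\bar g_1)$. The remaining case $(P^n,\bar f_0,\bar f_1,\bar g_0,\bar g_1)$ then follows by composing the first two — or directly, since each term now picks up two sign flips and is already unchanged before taking the modulus.

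There is no genuine obstacle here; the content is exactly the sign bookkeeping above. The only points worth a word of care are (i) that complementing the decision functions does not alter the wiring distributions $W_{AB}^{xy}$, which depend only on $P^n$, so the inner-product formula may be reused verbatim; and (ii) that the sign change is the same for all four terms and for all four choices of $xy$, which is precisely what lets the single outer modulus together with $\max_{xy}$ neutralise it. As a remark, this lemma is what licenses dropping the modulus in the proof of Corollary \ref{t1}: given any protocol one may (or may not) complement $f_0,f_1$ so that the unsigned CHSH combination is nonnegative, without changing $NL(DP)$.
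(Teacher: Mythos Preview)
Your proof is correct and follows essentially the same route as the paper: both derive $\langle \bar f,g\rangle=-\langle f,g\rangle$ from $1-2\bar f=-(1-2f)$ and then observe that the resulting global sign flip of the CHSH combination is absorbed by the modulus. You are a bit more explicit than the paper in tracking the $\max_{xy}$ and in noting that $P^n$ (hence $W_{AB}^{xy}$) is unaffected, but the argument is the same.
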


\begin{proof}
  Observe that for any $f,g\in\{0,1\}^{2^n}$ and any wiring distribution
  $W_{AB}$ on $n$ systems we have
  \begin{align*}
    \langle f,g\rangle&=(1-2f)^TW_{AB}(1-2g)\\
    &=-(1-2(1-f))^TW_{AB}(1-2g)\\
    &=-(1-2\bar{f})^TW_{AB}(1-2g)\\
    &=-\langle \bar{f},g\rangle.
  \end{align*}
  Apply it on
  \[
  NL(DP)=\left|\langle f_0,g_0\rangle+\langle f_0,g_1\rangle+\langle
    f_1,g_0\rangle-\langle f_1,g_1\rangle\right|
  \]
  to prove the lemma.
\end{proof}

\section{Calculate $\delta_n^+$ and $\delta_n^-$ by dynamic
  programming}\label{a1}

\begin{lem}\label{al1}
  Given numbers $n>1$ and $0\leq k,l\leq 2^n$, the values
  $\delta_n^+(k,l)$ and $\delta_n^-(k,l)$ can be calculated in
  $2^{O(n)}$ steps.
\end{lem}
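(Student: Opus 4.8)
The plan is to solve the recursion of Definition~\ref{problem} by dynamic programming, exploiting the fact that it has only exponentially-but-not-double-exponentially-many overlapping subproblems. The value $\delta_n^\pm(k,l)$ depends only on values $\delta_{n-1}^\pm(\cdot,\cdot)$, and an inspection of the optimization ranges shows that whenever $0\leq k,l\leq 2^m$ the four indices $i$, $k-i$, $j$, $l-j$ appearing on the right-hand side all lie in $[0,2^{m-1}]$. Hence the whole computation touches only the tuples $(m,k,l)$ with $0\leq m\leq n$ and $0\leq k,l\leq 2^m$, and the induced dependency graph on these tuples is acyclic with every edge pointing from level $m$ to level $m-1$, so a bottom-up evaluation order exists.

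First I would allocate two tables $T^+$ and $T^-$ indexed by such tuples and fill them level by level. At the base level $m=0$ the only admissible arguments are $k,l\in\{0,1\}$, and we set $T^\pm[0,k,l]=kl$ directly in constant time. For $m=1,\dots,n$ and each pair $(k,l)$ with $0\leq k,l\leq 2^m$, I would compute $T^+[m,k,l]$ (resp. $T^-[m,k,l]$) by looping over all admissible $(i,j)$ in the ranges of Definition~\ref{problem}, evaluating the bracketed $p$-weighted expression via four lookups into the already-filled level $m-1$ plus $O(1)$ arithmetic, and keeping a running maximum (resp. minimum). By induction on $m$, each entry is then exactly the value prescribed by its defining equation, so $T^\pm[n,k,l]=\delta_n^\pm(k,l)$; both signs are produced in the same sweep.

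For the count, level $m$ has $(2^m+1)^2=O(4^m)$ table entries, and the optimization interval for $i$ (likewise $j$) has length at most $2^{m-1}$, so each entry is filled with $O(4^{m-1})$ iterations of $O(1)$ work. Thus level $m$ costs $O(4^m\cdot 4^{m-1})=O(16^m)$ steps, and summing the geometric series the total cost is $\sum_{m=0}^n O(16^m)=O(16^n)=2^{O(n)}$, while the table occupies $\sum_{m=0}^n O(4^m)=O(4^n)=2^{O(n)}$ space. The crude bound $16^n=2^{4n}$ already gives the claimed $2^{O(n)}$; no cleverness in the inner optimization is needed.

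The only point that requires care — and where an error would most easily creep in — is the bookkeeping for the index ranges: one must verify that the $\min(\cdot,2^{m-1})$ cutoffs in Definition~\ref{problem} really do force $i$, $k-i$ and $j$, $l-j$ into $[0,2^{m-1}]$ for every $k,l\leq 2^m$, so that the table is self-contained and no out-of-range lookup can occur. This is a two-line case distinction on whether $k\leq 2^{m-1}$ or $k>2^{m-1}$ (and symmetrically for $l$). Once that is checked, the rest is a routine complexity analysis of a memoized recursion.
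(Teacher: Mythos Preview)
Your proposal is correct and follows essentially the same approach as the paper: a bottom-up dynamic-programming evaluation that fills tables $\{\delta_m^\pm(k,l)\}_{0\le k,l\le 2^m}$ level by level, with the same geometric-series cost estimate $\sum_m 2^{O(m)}=2^{O(n)}$. Your treatment is in fact slightly more careful than the paper's, which does not spell out the index-range check and states the base case somewhat loosely; otherwise the two arguments coincide.
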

\begin{proof}
  Taking advantage of the recursive structure of $\delta_n^+$ and
  $\delta_n^-$ we build the two tables
  \[
  D^+_n=\{\delta_n^+(i,j)\}_{0\leq i,j\leq 2^n}\text{ and
  }D^-_n=\{\delta_n^-(i,j)\}_{0\leq i,j\leq 2^n}
  \]
  recursively from the tables $D_{n-1}^+$ and $D_{n-1}^-$. By reading
  out from $D_{n-1}^+$ and $D_{n-1}^-$ the best of all possible
  solutions on $(i,j)$ for every entry $D_n^+(i,j)=\delta_n^+(i,j)$ and
  $D_n^-(i,j)=\delta_n^-(i,j)$ we fill the tables $D^+_n$ and $D^-_n$
  element-wise. To save time some simple reduction rules derivable from
  the inner product definition and the trivial symmetries
  $\delta_n^+(i,j)=\delta_n^+(j,i)$ and
  $\delta_n^-(i,j)=\delta_n^-(j,i)$ for each $i,j$ can be used here --
  only an eighth of the table must actually be calculated. The base
  tables are given by $D_0^+(k,l)=1$ and $D_0^-(k,l)=1$. Once the tables
  are filled, for fixed $n$ and $P$ the value $D(n,P)$ can be bounded
  with the data in $D_n^+$ and $D_n^-$ by iterating over all possible
  Boolean function classes $0\leq |f_x^{-1}(1)|\leq 2^n$ and $0\leq
  |g_y^{-1}(1)|\leq 2^n$. It requires roughly $\sum_{i=2}^n2^{4i-5}\in
  2^{O(n)}$ steps to fill the tables in this way -- a workload that is
  feasible for $n<10$ on a normal PC.
\end{proof}

\section{Finding the optimal $P_{iso}$}\label{a2}

Bounding $D(n,P)$ by Corollary \ref{t2} requires finding the least
nonlocal isotropic system $P_{iso}$ such that $P$ can be obtained from a
convex combination of $P_{iso}$ and a local system $P_L$.

\begin{lem}\label{al2}
  Given a system $P$ one can efficiently determine the isotropic system
  $P_{iso}$, such that $NL(P_{iso})$ is minimal under the constraint
  $P=qP_{iso}+P_{L}(1-q)$ for any $q\in[0,1]$ and any local system
  $P_L$.
\end{lem}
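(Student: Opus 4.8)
The plan is to phrase the task as a linear program and argue feasibility and efficiency. First I would recall that an isotropic system $P_{iso}$ with $NL(P_{iso}) = 2(1+\mu)$ is, up to the relabeling of inputs/outputs that aligns its nonlocal vertex with the CHSH inequality under consideration, determined by the single parameter $\mu \in [0,1]$; in matrix form it is $\mu P_{NL} + (1-\mu) P_F$ where $P_{NL}$ is the corresponding PR-box and $P_F$ the local isotropic point on that facet. So minimizing $NL(P_{iso})$ is the same as minimizing $\mu$. The constraint $P = q P_{iso} + (1-q) P_L$ with $P_L$ local and $q \in [0,1]$ can be folded together: write $P_{iso} = \mu P_{NL} + (1-\mu) P_F$ and absorb the $(1-q)P_L$ and the $q(1-\mu)P_F$ terms into a single unknown local contribution. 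Since the local polytope is itself defined by finitely many linear (CHSH and normalization/nonsignaling) inequalities, ``$P$ minus a nonnegative multiple of $P_{NL}$ is (a multiple of) a local system'' is a system of linear constraints in the unknowns.

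Concretely I would introduce the variable $t = q\mu \ge 0$ (the total weight on the PR-box) and a vector variable $R$ ranging over the local polytope $\mathcal{L}$, plus a scalar $s \ge 0$, and impose $P = t\, P_{NL} + s\, R$ with $R \in \mathcal{L}$ and $t + s = 1$ (normalization of $P$). Setting $R' = sR$, the condition ``$R \in \mathcal{L}$, $s\ge 0$'' becomes ``$R' \in \mathcal{L}$ scaled, i.e. $R'$ satisfies the homogeneous local inequalities and the right total weight'', which is again linear. Thus the whole feasibility region is a polytope, and we minimize the linear objective $t$ over it. The quantity we actually want, $\mu = t/q = t/(1-s) = t/t\cdot(\dots)$ — more simply, once $t$ is minimized we recover $\mu$ and $q$ from $t$ and the decomposition, and $NL(P_{iso}) = 2(1 + \mu)$ is monotone in $t$, so minimizing $t$ minimizes $NL(P_{iso})$. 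A linear program over a polytope of fixed (constant) dimension — the binary nonsignaling polytope lives in a space of dimension $8$ with a constant number of facets — is solvable in constant time, hence ``efficiently''; more honestly, it is solvable in time polynomial in the bit-size of $P$.

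The two points that need care: first, showing that the decomposition with $P_L$ an \emph{arbitrary} local system is equivalent to one where the local part is drawn from the explicitly described local polytope — this is immediate since the local systems are exactly the points of that polytope, but one should note it so that the LP has finitely many constraints. Second, and this is the main obstacle, one must check that restricting to isotropic $P_{iso}$ whose nonlocal vertex is the \emph{particular} PR-box $P_{NL}$ compatible with the CHSH inequality maximized in $NL(\cdot)$ loses no generality: an isotropic system is a mixture of a PR-box and its opposite, and one must argue that the optimal $P_{iso}$ — the one giving the best (largest, least restrictive) bound via Corollary \ref{t2} — can always be taken with its nonlocal direction matching $P$'s, since any other choice either makes $P$'s decomposition infeasible or yields a weakly nonlocal box whose $D(n,\cdot)$ bound is no smaller. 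Handling the finitely many CHSH-facet cases (there are eight, related by symmetry) reduces to running the LP once per facet and taking the best, which keeps the procedure efficient. Once that is settled the result follows by LP duality or simply by noting a minimum of a linear function over a nonempty bounded polytope is attained and computable.
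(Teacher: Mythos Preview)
Your linear-programming instinct is right, but there is a genuine gap in the objective. You substitute $P_{iso}=\mu P_{NL}+(1-\mu)P_F$, absorb $q(1-\mu)P_F$ together with $(1-q)P_L$ into a single local remainder, and then minimize $t:=q\mu$. That LP correctly returns the minimal PR-box weight $t_{\min}$ in any decomposition $P=tP_{NL}+(1-t)L'$ with $L'$ local (this is exactly the EPR2 ``nonlocal part''). But the quantity you are asked to minimize is $\mu$, not $t$; from $t=q\mu$ alone neither $q$ nor $\mu$ is determined, and your sentence ``$NL(P_{iso})=2(1+\mu)$ is monotone in $t$, so minimizing $t$ minimizes $NL(P_{iso})$'' does not hold as stated, since $q$ is free. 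In the variables $u=q\mu$ and $v=q(1-\mu)$ the constraint $P=uP_{NL}+vP_F+(1-u-v)P_L$ is linear, but the objective is $\mu=u/(u+v)$, a linear-fractional function, not a linear one. By collapsing $vP_F$ into the local part you have thrown away $v$ and with it the ability to recover $\mu$.

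The fix---and this is what the paper actually does---is a second optimization step: after finding $t_{\min}$, decompose the resulting local remainder $L'=(P-t_{\min}P_{NL})/(1-t_{\min})$ as $p_f P_F+(1-p_f)P_L$ with $p_f$ maximal (the paper gives the closed form $p_f=\min_{a,b,x,y}{L'}^{xy}_{AB}(a,b)/{P_F}^{xy}_{AB}(a,b)$), and then set $\mu=t_{\min}/\bigl(t_{\min}+(1-t_{\min})p_f\bigr)$. One should also check that taking $u=t_{\min}$ first is optimal; this follows because every nonnegativity constraint $P(a,b|x,y)\ge uP_{NL}(a,b|x,y)+vP_F(a,b|x,y)$ has the coefficient of $u$ nonnegative, so the maximal feasible $v$ is non-increasing in $u$ and hence $u/(u+v)$ is minimized at $u=t_{\min}$. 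Alternatively you can keep your single-shot formulation but solve the linear-fractional program in $(u,v)$ directly via a Charnes--Cooper transformation; either route is efficient. Your discussion of aligning the PR-box with the CHSH facet violated by $P$ (and looping over the eight symmetry-related facets) is fine and matches the paper's choice of ``CHSH$_P$''.
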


\begin{proof}
  The system $P$ identifies a particular CHSH inequality, now called
  CHSH$_P$, and therefore a family of isotropic systems denoted
  \[
  P_{iso}(\e)=\e P_{NL}+(1-\e)P_F
  \]
  with $\e\in[0,1]$ and $P_{NL}$ as the nonlocal vertex corresponding
  to CHSH$_P$ and the unique isotropic system $P_F\in \text{CHSH}_P$. To
  find the optimal $\e$ we first calculate the {\em local part of
    $P$}~\cite{Elitzur199225} with the linear program defined by Fitzi
  et al.~\cite{fitzi2010}. The local part of $P$ is the optimal value of
  the following linear program:
  \[
  \begin{array}{ll}
    \max&\sum_ip_i\\
    \text{such that}&\sum_{i}p_i{P_{L,i}}_{AB}^{xy}(a,b)\leq P_{AB}^{xy}(a,b)\\
    &p_i\geq 0.
  \end{array}
  \]
  Here, $P_{L,i}$ denote the vertices of the local polytope. Since
  $\sum_{i}p_i$ is maximal and there is only one nonlocal vertex
  violating CHSH$_P$ we have
  \[
  P=\sum_ip_iP_{L,i}+(1-\sum_ip_i)P_{NL}.\] Now, we decompose the local
  system $P^*=\sum_j\frac{p_j}{\sum_{i}p_i}P_{L,j}$ into
  \[
  P^*=p_fP_F+(1-p_f)P_L
  \]
  where $P_L$ is any local system. By the local version of Lemma 1
  in~\cite{fitzi2010} (holds since normalization and locality are linear
  properties) the maximal $p_f$ possible is given by
  \[
  p_f=\min_{a,b,x,y}\frac{{P^*}_{AB}^{xy}(a,b)}{{P_F}_{AB}^{xy}(a,b)}.
  \]
  Therefore, the parameter $\e$ fixing the optimal $P_{iso}(\e)$ can
  easily be constructed via
  \begin{align*}
    P&=\sum_ip_iP^*+(1-\sum_ip_i)P_{NL}\\
    &=\sum_ip_i(p_fP_F+(1-p_f)P_L)+(1-\sum_ip_i)P_{NL}\\
    &=\sum_ip_i(1-p_f)P_L+\sum_ip_ip_fP_F+(1-\sum_ip_i)P_{NL}\\
    &=\sum_ip_i(1-p_f)P_L+(1-\sum_ip_i(1-p_f))P_{iso}(\e),
  \end{align*}
  with
  \[
  \e=\frac{1-\sum_ip_i}{\sum_ip_ip_f+1-\sum_ip_i}.
  \]
  So, $\e$ can be determined efficiently.
\end{proof}

\end{document}